\title{The LLM as a Network Operator: A Vision for Generative AI in the 6G Radio Access Network}
\author{%
  Oluwaseyi~Giwa\thanks{corresponding author: oluwaseyi@aims.ac.za} \\
  African Institute for Mathematical Sciences\\
  Muizenberg, 7945 \\
  \texttt{oluwaseyi@aims.ac.za} \\
   \AND
   Michael~Adewole\\
   Olabisi Onabanjo University\\
   Ago-Iwoye, Nigeria\\
   \texttt{michaseyi@gmail.com}\\
   \AND
   Tobi~Awodumila \\
   African Institute for Mathematical Sciences \\
  Muizenberg, 7945 \\
  \texttt{tobi@aims.ac.za} \\
  \AND
  Pelumi Aderinto \\
  African Institute for Mathematical Sciences \\
  Muizenberg, 7945 \\
  \texttt{pelumi@aims.ac.za} \\
}
\newtheorem{theorem}{Theorem}[section]
\newtheorem{lemma}[theorem]{Lemma}
\newtheorem{definition}{Definition}[section]
\newtheorem{prop}[theorem]{Proposition}
\newtheorem{proof}{Proof}[section]
\begin{document}

\maketitle

\begin{abstract}
  The management of future AI-native Next-Generation (NextG) Radio Access Networks (RANs), including 6G and beyond, presents a challenge of immense complexity that exceeds the capabilities of traditional automation. In response, we introduce the concept of the LLM-RAN Operator. In this paradigm, a Large Language Model (LLM) is embedded into the RAN control loop to translate high-level human intents into optimal network actions. Unlike prior empirical studies, we present a formal framework for an LLM-RAN operator that builds on earlier work by making guarantees checkable through an adapter aligned with the Open RAN (O-RAN) standard, separating strategic LLM-driven guidance in the Non-Real-Time (RT) RAN intelligent controller (RIC) from reactive execution in the Near-RT RIC, including a proposition on policy expressiveness and a theorem on convergence to stable fixed points. By framing the problem with mathematical rigor, our work provides the analytical tools to reason about the feasibility and stability of AI-native RAN control. It identifies critical research challenges in safety, real-time performance, and physical-world grounding. This paper aims to bridge the gap between AI theory and wireless systems engineering in the NextG era, aligning with the AI4NextG vision to develop knowledgeable, intent-driven wireless networks that integrate generative AI into the heart of the RAN.
\end{abstract}

\section{Introduction}\label{intro}
Next-Generation (NextG) Radio Access Networks (RANs), including 6G and future WiFi standards, are growing increasingly complex, driven by network densification, dynamic spectrum sharing, and the need to support heterogeneous services with conflicting quality of service (QoS) requirements (e.g., eMBB, URLLC). This complexity motivates new automation paradigms. In the context of AI4NextG research, this complexity is not just a challenge but an opportunity: wireless environments are dynamic, high-dimensional, and partially observable\textemdash characteristics that make them ideal testbeds for developing robust, adaptive, and explainable AI/ML solutions. 

Open RAN (O-RAN) aims to ``bring openness and intelligence'' into traditionally closed RANs \citep{azariah2024}. In particular, intent-driven management is emerging as a key abstraction: e.g., RFC 9315 defines \textit{intent} as a high-level, declarative specification of network goals, a notion adopted by 3GPP and TM Forum \citep{tmforum2023, xu2023, bimo2025}. At the same time, transformer-based large language models (LLMs) have shown a remarkable ability to interpret unstructured commands and generate plans. In line with the AI4NextG vision of AI-native protocol and architecture design, we propose LLM-RAN operators: architectures in which an LLM is embedded into the RAN control plane to translate user intents (in natural language or formal policy terms) into network configurations.

Our approach is distinct from existing work because we provide mathematical definitions, analytical propositions, and convergence guarantees that allow rigorous reasoning about feasibility, stability, and expressiveness, properties rarely addressed in prior LLM-for-RAN research. This analytical treatment not only grounds the vision of generative AI in the RAN but also creates a foundation for formal verification, making it relevant for both research and NextG standardization.

Beyond intent translation, the proposed framework directly supports core AI4NextG problem domains such as dynamic spectrum access, where LLM-guided policies can coordinate frequency and power allocation under regulatory and interference constraints; cross-layer machine learning optimizations for joint sensing, control, and communication; and edge AI deployment for low-latency decision-making in distributed RAN architectures. While pioneering works such as RANGPT, WiLLM, LLM-xApp, etc., have demonstrated proof-of-concept integrations, \citep{polese2022, rangpt2023, liu2025, xingqi2025, gajjar2025}, a formal framework for analyzing the stability, expressiveness, and convergence of such AI-native RAN control systems, critical for NextG standardization and deployment, is still missing. We therefore introduce such a framework.

Contemporary RAN surveys note that each generation (D-RAN, C-RAN, vRAN, Open RAN) seeks more intelligence and flexibility \citep{azariah2024, mathushaharan2025}. In particular, O-RAN architecture defines two key controllers: a near-real-time (RT) RAN intelligent controller (RIC) and a non-RT RIC \citep{polese2022}. LLMs naturally fit as cognitive components in these RICs. For example, we can view the non-RT RIC as hosting an LLM-based ``Strategy Agent'' that interprets intent, while the near-RT RIC hosts reactive ``Operator Agent'' xApps. Aira's RANGPT \citep{rangpt2023} already demonstrates conversational LLM interfaces to RAN data, yielding energy-saving and performance-optimization commands.

\section{Background and Related work}\label{related-work}
LLM-driven network control lies at the intersection of intent-based networking, O-RAN architectures, and agentic AI. In O-RAN, the Non-RT RIC handles strategic policy and long-term optimization (> 1s) while the Near-RT RIC manages radio resource control loops (10 ms-1s) via xApps. This modular separation enables the integration of AI-driven agents, including LLM-based rApps for policy translation and coordination \citep{polese2022, azariah2024}.

Recent work has explored LLMs in this context:
\begin{itemize}
    \item RANGPT \citep{rangpt2023}: Conversational LLM interface to RAN data for energy and performance optimization.
    \item WiLLM \citep{liu2025}: LLM inference for wireless slicing with novel ``Tree-Branch-Fruit'' architecture.
    \item LLM-xApp \citep{xingqi2025}: Meta-prompt-driven resource allocation refinement for QoS targets.
    \item ORANSight-2.0 \citep{gajjar2025}: Open-source, fine-tuned foundational LLMs for O-RAN tasks.
\end{itemize}

While these works demonstrate empirical feasibility, none provide a formal framework to analyze expressiveness, stability, or convergence. By combining AI theory with RAN-specific constraints, we enable principled evaluation of AI-native RAN control loops.

\section{The LLM-RAN Operator Concept}\label{concept}
We define an LLM-RAN operator as an abstract mapping that takes a user intent and network state, and outputs RAN control actions. Formally, let \(\mathcal{S}\) be the state space of the RAN (e.g., measurements and configurations of all cells), \(\mathcal{I}\) the intent space (natural-language or formal commands), and \(\mathcal{A}\) the action space (feasible reconfigurations, e.g., power levels, slice settings). We model actions as well-typed commands in a finite \emph{RAN Command DSL} $\mathcal{L}$, and require every emission to pass a total allow-list \emph{validator} \(V:\mathcal{L}\!\to\!\{0,1\}\) that enforces units, bounds, and safety before compilation to O-RAN interfaces (A1/E2/O1).

\begin{definition}[Intent]
An intent \(i \in \mathcal{I}\) is a high-level specification of goals or constraints (e.g. ``maximize cell-edge throughput subject to power budget''), as defined by standards \citep{clemm2022, bimo2025}
\end{definition}

\begin{definition}[LLM-RAN Operator]
    An LLM-RAN operator is a function
    \begin{equation}
        \mathcal{O}_{\text{LLM}}: \mathcal{I} \times \mathcal{S} \rightarrow \mathcal{A} \times \mathcal{S},
    \end{equation}
\end{definition}

This operator maps an intent \(i\) and current state \(s\) to a specific control action \(a\). For example, the fine-tuned models from the ORANSight-2.0 suite can be viewed as specific, empirically realized instances of such an operator \citep{gajjar2025}. The network then transitions to a new state \(s'\) according to the environment's dynamics. In practice, the LLM may output a sequence of actions (a policy) or RAN commands to be enacted by an underlying controller.

Existing literature indicates that LLMs can serve as highly flexible translators between intents and detailed plans. For example, \citep{xingqi2025} demonstrated an LLM-based policy that iteratively refines slice allocations to maximize QoS. We capture this capability via a proposition on expressive power:

\begin{prop}[Expressiveness of LLM-Policies]
An LLM-RAN operator with sufficient model capacity can approximate any effective mapping from intents to RAN actions, assuming enough prompt context. In other words, the space of policies representable by a large transformer is universal for bounded RAN control tasks \citep{ibm2025, zeinab2025}.
\end{prop}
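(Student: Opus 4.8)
The plan is to split the claim into a precise representational statement and then discharge it with a transformer universality result. First I would pin down what ``effective mapping'' should mean: an intent-to-action map is \emph{effective} when it is computable, i.e.\ there is an algorithm (a Turing machine) that, given an encoding of the pair $(i,s)\in\mathcal{I}\times\mathcal{S}$, halts and returns a command in the finite DSL $\mathcal{L}$ that passes the validator $V$. Call this target $f^{*}:\mathcal{I}\times\mathcal{S}\to\mathcal{A}$, which I isolate from the state-transition component since the new state $s'$ is produced by the environment's dynamics rather than by the operator. I would then fix a tokenizer that injects encoded $(i,s)$ into a finite vocabulary, and a decoder that reads off a DSL command from the generated suffix; ``enough prompt context'' is made quantitative as the requirement that the context window accommodates the encoded input together with any intermediate computation trace. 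With these conventions the proposition becomes: some parameterization of the transformer family realizes, or arbitrarily closely approximates, $f^{*}$ on the relevant domain.

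Next I would handle the genuinely \emph{bounded} tasks first, where the inputs have bounded length. Then $\mathcal{I}\times\mathcal{S}$ has only finitely many admissible encodings and the output alphabet $\mathcal{A}$ is finite, so $f^{*}$ is just a finite lookup table. Here I would invoke the interpolation capacity of transformers: a network of sufficient width and depth can fit any finite set of distinct input-output pairs exactly, which is the discrete analogue of the universal approximation theorems for sequence-to-sequence transformers. On this finite domain ``approximate'' can be strengthened to exact agreement, since the outputs live in a discrete set and a correct argmax over output logits recovers the intended command.

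For the unbounded case I would instead lean on computational universality: transformers equipped with autoregressive decoding and a scratchpad (chain-of-thought) can simulate an arbitrary Turing machine step by step, provided the generated trace fits in context. Because $f^{*}$ is computable by hypothesis, I would embed its algorithm into the decoding dynamics so that the model runs the computation and emits the final command. I would then close the argument by composing the decoder with the total allow-list validator $V$: restricting emissions to $\{\,c\in\mathcal{L}:V(c)=1\,\}$ guarantees every output is well-typed and safe, and since $f^{*}$ already ranges over valid commands this projection costs no expressiveness on the target.

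The hard part will be the precisification itself rather than any single calculation. ``Effective mapping'' and ``bounded RAN control tasks'' are informal, and the right universality theorem depends on which reading we adopt: exact representation on a finite domain, continuous approximation after embedding, or Turing-completeness with a growing scratchpad. The remaining gap is the idealization carried by ``sufficient model capacity'' and ``enough prompt context'': real transformers have fixed precision and a finite window, so I would have to state the guarantee as a scaling law (context at least input length plus trace length; width and depth growing with table size) and be explicit that the result concerns the \emph{existence} of weights realizing $f^{*}$, not their \emph{learnability} from data or their robustness under training, a caveat that also cleanly separates this expressiveness claim from the convergence theorem that follows.
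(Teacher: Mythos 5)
Your proposal is sound and reaches the same conclusion, but it takes a genuinely different route from the paper's argument. The paper's sketch (Appendix B) rests on a single key lemma: the result of Yun et al.\ that transformers are universal approximators of continuous, permutation-equivariant sequence-to-sequence functions. Its work then goes into showing that the policy $f:\mathcal{S}\to\mathcal{A}$ can be cast as a seq-to-seq map via tokenization of the state (special tokens, flattened quantized matrices) and of the action (a DSL-like output sequence), after which the citation discharges the claim. You instead precisify ``effective mapping'' as \emph{computable}, split the claim into a bounded case (finite domain, so $f^{*}$ is a lookup table and exact interpolation suffices) and an unbounded case (Turing-simulation via autoregressive decoding with a scratchpad), and explicitly compose with the validator $V$. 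Your decomposition buys more: the bounded case upgrades ``approximate'' to exact agreement on the discrete output set, which the paper's continuous-approximation route cannot claim without an extra argmax/rounding step it never states; your computability reading also makes ``effective'' non-vacuous, whereas the paper leaves it informal. The paper's route buys simplicity and a direct literature anchor, and its tokenization discussion is more concrete about \emph{how} the RAN state becomes a sequence---a step you assume via ``fix a tokenizer.'' Your closing caveat (existence of weights versus learnability) is not in the paper's proof but matches its spirit, since the paper also concedes a formal proof is out of scope. Neither argument is a complete proof, and both inherit the same unresolved issue: the proposition as stated is too informal for either sketch to be more than a plausibility argument.
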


\begin{lemma}[Monotonic Improvement Under LLM Guidance]
Let \(U(s)\) be a utility function representing a network performance metric. Assume the LLM operator is designed to solve the single-step optimization problem \(a_t = \arg \max_{a \in \mathcal{A}}U(f_{\text{env}}(s_t, a))\). If a solution \(a_t\) exists such that \(U(s_{t + 1}) \geq U(s_t)\), then the sequence of states generated by the system is monotonically improving in utility. This property is observed in LLM-guided cases (e.g., decreasing transmit power raises efficiency).
\end{lemma}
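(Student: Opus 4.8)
The plan is to recognize that, once the single-step condition in the hypothesis is granted, the statement reduces to a routine induction on the time index $t$; the genuine mathematical content therefore lies in grounding that per-step condition in a concrete feasibility assumption rather than in any deep estimate. First I would make explicit the dynamics implicit in the operator: since $a_t = \arg\max_{a \in \mathcal{A}} U(f_{\text{env}}(s_t, a))$ and $s_{t+1} = f_{\text{env}}(s_t, a_t)$, the maximizing choice yields the sharp identity $U(s_{t+1}) = \max_{a \in \mathcal{A}} U(f_{\text{env}}(s_t, a))$. This is the key equation from which every per-step comparison follows.

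Next I would establish the single-step non-degradation inequality $U(s_{t+1}) \geq U(s_t)$. The honest route is to exhibit a feasible status-quo action: if $\mathcal{A}$ contains a null action $a_{\text{null}}$ with $f_{\text{env}}(s_t, a_{\text{null}}) = s_t$ (a reconfiguration leaving the cell state unchanged and admitted by the validator $V$ on the DSL $\mathcal{L}$), then the argmax property forces $U(s_{t+1}) \geq U(f_{\text{env}}(s_t, a_{\text{null}})) = U(s_t)$. This recovers exactly the existence hypothesis of the lemma, now anchored to a checkable feasibility condition rather than assumed in the abstract. With the per-step inequality in hand, the conclusion is immediate by induction: the base case $U(s_0) \le U(s_0)$ is trivial, and the inductive step chains $U(s_0) \le U(s_1) \le \cdots \le U(s_t) \le U(s_{t+1})$, so $\{U(s_t)\}_{t \ge 0}$ is monotonically non-decreasing.

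The part I expect to be the main obstacle is justifying, rather than merely assuming, that a non-degrading action exists at every step and that the dynamics are deterministic. Two subtleties arise: first, the action space must be guaranteed to contain (or the validator $V$ must always admit) a status-quo option, since otherwise every admissible reconfiguration might strictly lower utility and monotonicity would fail; second, if $f_{\text{env}}$ is stochastic or noisy, the deterministic inequality must be replaced by an expected-value or almost-sure statement, and the monotone conclusion would then hold only in expectation. I would therefore record the feasibility-of-non-degradation property as an explicit standing assumption, observe that it holds whenever a null action is admissible under $V$, and flag the stochastic setting as the natural caveat requiring a probabilistic restatement.
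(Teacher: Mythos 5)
Your proposal is correct and follows essentially the same route as the paper's own proof in Appendix C: both exhibit a null action $a_{\text{null}}$ with $f_{\text{env}}(s_t, a_{\text{null}}) = s_t$, invoke the argmax property to get $U(s_{t+1}) \geq U(f_{\text{env}}(s_t, a_{\text{null}})) = U(s_t)$, and chain the per-step inequality over $t$. Your explicit flagging of the null-action feasibility (via the validator $V$) and the stochastic caveat as standing assumptions is a slight sharpening of the paper's presentation, which asserts the existence of $a_{\text{null}}$ without further justification, but the argument is the same.
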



\begin{theorem}[Convergence to Fixed Point]
If the LLM-RAN operator and environment form a contraction mapping on \(\mathcal{S}\) (e.g. under a suitable norm), then the closed-loop sequence \((s_t)\) converges to a unique fixed point \(s\text{\textasciicircum}\) satisfying \(\mathcal{O}_{\rm LLM}(i,s\text{\textasciicircum}) = a\text{\textasciicircum},; f_{\rm env}(s\text{\textasciicircum}, a\text{\textasciicircum}) = s\text{\textasciicircum}\). In practice, if the LLM’s updates become smaller over time (e.g., guided by decaying exploration
), the system reaches equilibrium.
\end{theorem}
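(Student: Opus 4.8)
The plan is to recognize the statement as a direct instance of the Banach fixed-point theorem (the contraction mapping principle) and to reduce the closed-loop dynamics to a single self-map on the state space. First I would fix the intent $i \in \mathcal{I}$ and define the closed-loop map $T_i : \mathcal{S} \to \mathcal{S}$ by composing the operator's action selection with the environment transition: writing $\pi_i(s)$ for the action component of $\mathcal{O}_{\rm LLM}(i,s)$, set $T_i(s) := f_{\rm env}(s, \pi_i(s))$. With this definition the abstract update $s_{t+1} = f_{\rm env}(s_t, a_t)$ with $a_t = \pi_i(s_t)$ becomes the plain Picard iteration $s_{t+1} = T_i(s_t)$, so convergence of the sequence $(s_t)$ is exactly convergence of the iterates of $T_i$, and a fixed point $s^{*}$ of $T_i$ together with $a^{*} := \pi_i(s^{*})$ recovers the equilibrium conditions stated in the theorem.

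Next I would make explicit the two standing hypotheses that the phrase ``form a contraction mapping on $\mathcal{S}$ under a suitable norm'' abbreviates: (i) $(\mathcal{S}, \|\cdot\|)$ is a complete metric space, which is natural when states are bounded measurement and configuration vectors forming a closed subset of a Banach space; and (ii) there is a Lipschitz constant $L \in [0,1)$ with $\|T_i(s) - T_i(s')\| \le L\,\|s - s'\|$ for all $s, s' \in \mathcal{S}$. Given these, Banach's theorem applies verbatim: $T_i$ admits a unique fixed point $s^{*} \in \mathcal{S}$, every orbit obeys the geometric bound $\|s_t - s^{*}\| \le L^{t}\,\|s_0 - s^{*}\|$, and hence $s_t \to s^{*}$ from any initialization. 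Setting $a^{*} = \pi_i(s^{*})$ then yields $f_{\rm env}(s^{*}, a^{*}) = T_i(s^{*}) = s^{*}$ and $\mathcal{O}_{\rm LLM}(i, s^{*}) = (a^{*}, s^{*})$, precisely the claimed fixed-point relations.

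I would then add a remark linking the constant $L$ to the theorem's informal ``decaying exploration'' observation. If the per-step map is only asymptotically contractive, so that the effective moduli $L_t \to L_\infty < 1$ or the update magnitudes $\|s_{t+1} - s_t\|$ form a summable sequence, convergence still follows, but via a Cauchy argument rather than the clean geometric estimate; I would record this as a corollary-style strengthening rather than prove it in detail.

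The main obstacle is not the fixed-point machinery, which is standard, but \emph{justifying hypothesis (ii)} for a composite map that embeds an LLM policy $\pi_i$. An LLM induces a highly nonlinear, potentially discontinuous, $\arg\max$-type decision rule that is in general non-Lipschitz, so contractivity of $T_i = f_{\rm env}(\cdot, \pi_i(\cdot))$ cannot be taken for granted; it must either be assumed outright, as the theorem does, or earned from structural conditions, such as a Lipschitz environment $f_{\rm env}$, a policy whose emitted action varies smoothly with state, and a loop gain small enough that the product of the two Lipschitz moduli stays below one. I would therefore flag that the theorem is conditional, that the real engineering content lies in designing the operator-plus-controller so this composite constant is provably below unity, and that $\pi_i$ must be rendered single-valued (with deterministic tie-breaking in the companion lemma's $\arg\max$) for Banach to apply; otherwise $T_i$ becomes set-valued and one needs a Kakutani-type argument that guarantees existence but not uniqueness or geometric convergence.
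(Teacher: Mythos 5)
Your proposal is correct and follows essentially the same route as the paper: both define the closed-loop self-map $F(s) = f_{\rm env}(s,\mathcal{O}_{\rm LLM}(i,s))$ and invoke Banach's fixed-point theorem under the assumed contraction hypothesis, and your caveats about completeness, the non-Lipschitz/discontinuous nature of the LLM policy, and tokenization-induced discontinuities mirror the paper's own discussion in Appendix~D. The only additions beyond the paper's treatment are minor refinements (the explicit geometric rate bound, the Cauchy argument for asymptotically contractive updates, and the single-valuedness remark), which strengthen rather than change the argument.
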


\begin{proof}[Idea]
    By Banach’s fixed-point theorem, any contraction \(s_{t+1} = F(s_t)\) converges to a unique \(s^*\). Here \(F(s) \coloneqq f_{\rm env}(s,\mathcal{O}_{\rm LLM}(i,s))\). If \(|\partial F/\partial s|<1\), repeated application yields convergence. While we do not prove contraction for a specific network, many network control policies behave stably when near an optimum. Empirical RAN studies (with or without LLM) often show rapid convergence in practice.
\end{proof}

This formalism lets us reason about feasibility: e.g., if the intent \(i\) is achievable, there exists \(s\text{\textasciicircum}\) such that \(\mathcal{O}_{\rm LLM}(i,s\text{\textasciicircum})\) yields no further changes (zero residual). We note that proving \(\text{F}\) is a contraction for a general, high-dimensional RAN environment is a major open research challenge. However, this framework provides the analytical tools to identify specific conditions on the environment \(f_{\text{env}}\) and the operator (\(\mathcal{O}_{\text{LLM}}\)) under which convergence can be formally guaranteed\footnote{Please check Appendix for all detailed proofs.}.

\section{Architectural Framework for an LLM-RAN Operator}\label{framework}
To realize the theoretical (\(\mathcal{O}_{\text{LLM}}\)) operator, we propose a modular architectural framework grounded in the O-RAN paradigm. The core principle is to decouple strategic, slow-loop reasoning from reactive, fast-loop execution. This is achieved by logically separating the components responsible for understanding intent, perceiving state, reasoning, and acting. Fig.~\ref{llm-diagram} illustrates this conceptual design.
\begin{figure}
    \centering
    \includegraphics[width=1.0\linewidth]{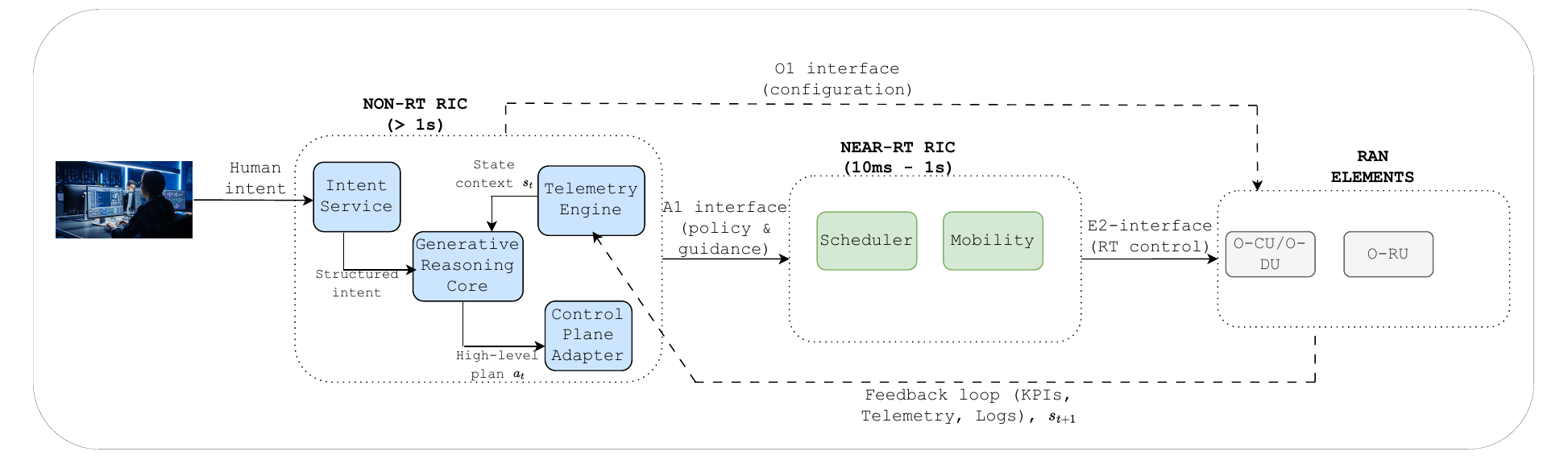}
    \caption{The architectural framework for the LLM-RAN Operator. The Intent Service and Reasoning Core typically reside in the Non-RT RIC, generating policies for the Control Adapter, which configures xApps in the Near-RT RIC.}
    \label{llm-diagram}
\end{figure}

\subsection{Core Components}
Our proposed architecture consists of 4 key logical components:
\begin{itemize}
    \item[1.] \textbf{Intent Service:} This component serves as the primary interface for human operators. Its function is to process high-level intents \(i \in \mathcal{I}\).  It must parse natural language commands, validate them against a set of permissible goals, and translate them into a structured, machine-readable format (e.g., a JSON object with objectives and constraints). This structured intent is then passed to the reasoning core.
    \item[2.] \textbf{Telemetry Engine:} This is the perception system of the operator, responsible for constructing the state vector \(s \in \mathcal{S}\). It aggregates and filters high-dimensional data from various network sources, including Key Performance Indicators (KPIs) from the O-RAN Non-RT RIC and near-real-time measurements from the near-RT RIC. A critical function of this engine is to create a coherent tokenized summary of the network's current condition or ``state context'' that is suitable for the LLM.
    \item[3.] \textbf{Generative Reasoning Core:} This is the heart of the \(\mathcal{O}_{\text{LLM}}\) operator, embodied by a fine-tuned foundation model. It receives the structured intent from the Intent Service and the tokenized summary from the Telemetry Engine. Its task is to synthesize this information and generate a high-level plan or a specific action \(a \in \mathcal{A}\). This process mirrors the ``Reason-Act'' paradigm seen in agentic AI frameworks \citep{yao2023}. The output could be an abstract directive (e.g., ``Prioritize URLLC traffic in cell'') or a set of concrete parameter changes.
    \item[4.] \textbf{Control Plane Adapter \& Execution Engine:} This component acts as the bridge between the LLM's generative output and the network's control interfaces. It performs two crucial roles, which are \textit{translation} and \textit{safety \& guardrails}. First, it translates the abstract plan from the Reasoning Core into low-level, executable commands compliant with O-RAN interfaces (e.g., A1 policies for the Near-RT RIC or O1 configurations for network elements). Additionally, it serves as a critical safety filter, ensuring that the generated actions are syntactically valid and semantically safe, preventing the LLM from issuing commands that could destabilize the network. This component effectively gives the LLM the ability to use network ``tools'' \citep{schick2023}.
\end{itemize}

\subsection{Operational Workflow and O-RAN Mapping}
The components work in a closed loop, naturally mapping onto the O-RAN architecture's separation of timescales. A human operator provides an intent (e.g., ``Reduce energy consumption in the downtown sector between midnight and 6 AM''). The \textbf{Intent Service} and \textbf{Generative Reasoning Core}, operating as an rApp within the Non-RT RIC, process this. The Reasoning Core, using data from the \textbf{Telemetry Engine}, devises a new energy-saving policy. The \textbf{Control Plane Adapter} translates this policy into A1 interface directives. For example, it might instruct the Near-RT RIC to favor a specific energy-efficient scheduling xApp. The xApps in the Near-RT RIC execute the policy, making millisecond-level decisions (e.g., adjusting scheduling weights, putting component carriers to sleep) that align with the LLM's strategic guidance. The Telemetry Engine observes the impact of these actions on the network KPIs (energy saved, user throughput), creating an updated state context \(s_{t + 1}\) and closing the loop.

This hierarchical design allows the LLM to provide high-level cognitive guidance without being burdened by the stringent latency requirements of real-time radio resource management, which remains the domain of specialized xApps.

\section{Challenges and Research Directions}\label{challenges}
While the LLM-RAN operator concept offers a transformative vision, its practical realization hinges on addressing several fundamental research challenges. Our formal framework helps to define these hurdles precisely. The most significant challenge is the inherent latency of large models. The inference time for contemporary LLMs can be hundreds of milliseconds to seconds, which is incompatible with the stringent 10ms-1s control loops of the Near-RT RIC. Our hierarchical architecture mitigates this by placing the LLM in the Non-RT RIC. Still, future work must explore techniques like model distillation, quantization, and specialized hardware accelerators to push generative intelligence closer to the real-time domain.

Furthermore, an LLM that ``hallucinates'' a network command could have catastrophic consequences, from service outages to equipment damage. While systems like ORANSight-2.0 \citep{gajjar2025} are fine-tuned to be domain-specific, they are not immune to such failures. This underscores the need for a formal approach to safety and verification, which our framework aims to enable. This necessitates research in two areas: (i) formal verification, and (ii) Explainability. 

Another key challenge is that LLMs are trained on vast corpora of text, not on the laws of physics. A key open question is how to ``ground'' the model's understanding in the complex dynamics of the wireless environment. The operator must learn that specific actions have physical consequences (e.g., increasing power also increases interference). This requires bridging the gap between the symbolic reasoning of LLMs and the continuous, high-dimensional state space (\(\mathcal{S}\)) of the RAN, likely through sophisticated multi-modal models trained in high-fidelity digital twins. Finally,  fine-tuning an LLM for the RAN domain requires vast amounts of high-quality data, including network states, corresponding actions, and resulting outcomes. Real-world data, especially for failure scenarios, is scarce, proprietary, and often noisy. Future research must focus on developing advanced simulation techniques and data-efficient learning methods (e.g., few-shot learning, reinforcement learning from human feedback) tailored to the networking domain.

\section{Conclusion}\label{conclusion}
We introduced a formal framework for the LLM-RAN Operator, a novel paradigm for intelligent and intent-driven RAN control in the NextG era. By defining the operator as a mathematical object and analyzing its theoretical properties through propositions and theorems, we move beyond empirical demonstrations to provide analytical tools for reasoning about expressiveness, stability, and convergence. The proposed O-RAN-grounded architecture separates high-level strategic reasoning from near-RT execution, enabling generative AI integration without violating stringent latency constraints.

In practice, this framework could be instantiated for high-priority AI4NextG use cases, including learning-driven spectrum allocation policies for interference-limited networks, cross-layer reinforcement learning for QoS guarantees, and generative AI agents running on edge nodes to provide localized RAN control. Addressing the significant challenges of real-time performance, safety, and physical-world grounding will require interdisciplinary collaboration between the wireless and AI communities.

We believe that positioning LLMs as cognitive agents at the heart of the RAN represents a fundamental shift from traditional automation to autonomous, intent-driven management. This is not merely an academic exercise but a necessary step to unlock the full potential of 6G and beyond, enabling networks that are not just connected, but truly intelligent.

\bibliographystyle{unsrt}

\bibliography{references}

\small


\appendix

\section{Appendices and Supplementary Material}
\subsection{Background and Related Work}
Fig.~\ref{network_evolution} illustrates the motivation: as wireless systems evolve from 1G to 6G, AI has progressively augmented control functions \citep{mathushaharan2025}.

\begin{figure}[H]
    \centering
    \includegraphics[width=1.0\linewidth]{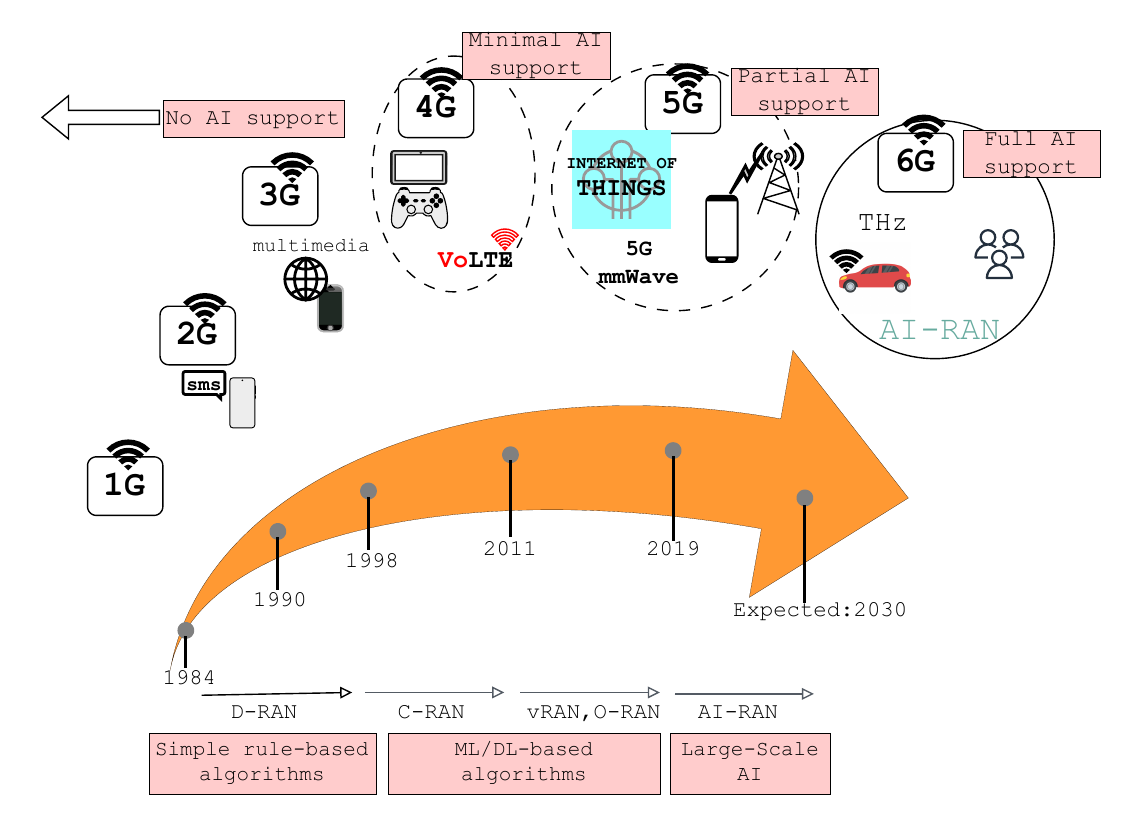}
    \caption{Evolution of mobile networks and RAN architectures with increasing AI integration}
    \label{network_evolution}
\end{figure}
\citep{clemm2022} introduced an Intent-Based RAN framework where LLMs translate JSON-encoded intents into O-RAN configurations. In particular, the LLM-hRIC framework embeds an LLM as an rApp in the non-RT RIC to provide strategic guidance for near-RT xApps \citep{lingyan2025}. \citep{xingqi2025} also presented LLM-xApp, using LLM meta-prompts to refine slice resource allocations for QoS iteratively. These works all show that structured prompting can yield effective RAN control policies. Most significantly, \citep{gajjar2025} introduced ORANSight-2.0, a comprehensive effort to develop foundational, open-source LLMs specifically for O-RAN. They presented RANSTRUCT, a novel framework for generating domain-specific instruction-tuning datasets. They demonstrated through extensive benchmarking that their fine-tuned models can outperform general-purpose and even closed-source alternatives on O-RAN tasks.

Standardization bodies have taken note. An O-RAN alliance report highlights how ``\textit{multi-domain network digital twins}'' can optimize RAN resource allocation, energy management, and fault detection \citep{orang2024}. 3GPP and TM Forum are formalizing intent frameworks for 5G RAN management \citep{clemm2022, bimo2025}. In parallel, agentic AI literature emphasizes the components needed for autonomous LLM agents (intentionality, planning, memory) \citep{ibm2025}.

\subsection{Appendix A: Detailed Formalism of the LLM-RAN Operator State, Action, and Reward Spaces}
We provide a more detailed mathematical definition of the core components of the LLM-RAN Operator framework introduced in Section~\ref{concept}. These formalisms are essential for grounding the theoretical analysis in the practical realities of a wireless network environment.

\subsubsection*{The State Space \(\mathcal{S}\)}

The state \(s_t \in \mathcal{S}\) at time \(t\) must capture a snapshot of the entire RAN environment. It is a high-dimensional, heterogeneous tuple. A comprehensive state representation can be defined as a product space:
\[
\mathcal{S} = \mathcal{H} \times \mathcal{Q} \times \mathcal{C} \times \mathcal{I}
\]
Where the component spaces are defined as follows:

\begin{itemize}
    \item \textbf{\(\mathcal{H}\) (Channel State Space):} This space represents the physical channel conditions for all \(K\) active users. It is a set of complex-valued matrices, \(\mathcal{H} = \{H_1, H_2, \dots, H_K\}\), where \(H_k \in \mathbb{C}^{N_R \times N_T}\) represents the channel between the \(N_T\) transmit antennas at the base station and the \(N_R\) receive antennas of user \(k\).

    \item \textbf{\(\mathcal{Q}\) (Queueing State Space):} This space captures the buffer status for all \(K\) users across \(F\) different traffic flows (e.g., eMBB, URLLC). It can be represented as a matrix \(Q \in \mathbb{R}_+^{K \times F}\), where each element \(Q_{k,f}\) is the current queue length (in bits or packets) for user \(k\)'s flow \(f\).

    \item \textbf{\(\mathcal{C}\) (Configuration Space):} This space describes the current configuration settings of the RAN. It is a structured tuple itself, e.g., \(\mathcal{C} = (P, B)\), where \(P = (p_1, \dots, p_M)\) is a vector of transmit power levels for the \(M\) cells, and \(B\) is a set of binary variables indicating which component carriers are currently active.

    \item \textbf{\(\mathcal{I}\) (Interference State Space):} This space characterizes the interference environment. It can be represented by a vector of Signal-to-Interference-plus-Noise Ratio (SINR) for each user, \(\mathbf{I} = (\text{sinr}_1, \dots, \text{sinr}_K)\), or by a more complex interference covariance matrix.
\end{itemize}

A full state vector \(s_t\) is therefore an instance \(s_t = (h_t, q_t, c_t, i_t)\) drawn from this product space.

\subsubsection*{The Action Space \(\mathcal{A}\)}

The action \(a_t \in \mathcal{A}\) is not a simple continuous vector but a structured, combinatorial command that modifies the network's configuration. The action space $\mathcal{A}$ consists of a set of valid configuration changes. An action can be formally defined as a function call with parameters:
\[
a_t = \{ \texttt{command\_type}, \texttt{parameters} \}
\]

Examples of specific commands include:

\begin{itemize}
    \item \textbf{Power Control:} \\
    \(a = \{\texttt{"set\_power"}, \{ (\texttt{cell\_1}, p'_1), (\texttt{cell\_2}, p'_2) \} \}\),\\
    subject to a constraint such as \(\sum p'_i \leq P_{\max}\).

    \item \textbf{Resource Block Allocation:} \\
    \(a = \{\texttt{"assign\_rbs"}, \{ (\texttt{user\_1}, [\texttt{rb}_1, \texttt{rb}_2]), (\texttt{user\_3}, [\texttt{rb}_5]) \} \}\).

    \item \textbf{Scheduler Weight Adjustment:} \\
    \(a = \{\texttt{"set\_scheduler\_weights"}, \{ (\texttt{flow\_URLLC}, w_1), (\texttt{flow\_eMBB}, w_2) \} \}\), \\
    where \(w_1 + w_2 = 1\).
\end{itemize}

The key characteristic is that \(\mathcal{A}\) is a discrete set of structured commands, and the task of the LLM-RAN operator is to generate a syntactically and semantically valid command from this set.

\subsubsection*{The Intent-Conditioned Reward Function \(R(s, a \mid i)\)}

The reward function \(R\) quantifies the desirability of a state transition and is critically conditioned on the operator's intent \(i\). Let the utility of a state \(s\) be a weighted sum of key performance indicators (KPIs), 
\[
U(s) = \mathbf{w}^T \cdot \texttt{kpi}(s)
\]
where \(\texttt{kpi}(s)\) is a vector of metrics such as \([\text{throughput}, \text{latency}, \text{energy\_consumption}]\). The reward for taking action \(a\) in state \(s\) is the change in utility:
\[
R(s, a) = U(s') - U(s), \quad \text{where } s' = f_{\text{env}}(s, a)
\]

The intent \(i\) directly influences the weight vector \(\mathbf{w}\). For example:

\begin{itemize}
    \item \textbf{Intent \(i_1\) = ``Maximize throughput'':} \\
    The LLM translates this to a weight vector \(\mathbf{w}_1 = [1, 0, 0]\), prioritizing the throughput KPI.

    \item \textbf{Intent \(i_2\) = ``Ensure ultra-low latency for hospital IoT devices'':} \\
    This translates to a more complex weight vector \(\mathbf{w}_2 = [\alpha, -\beta, \gamma]\), where \(\beta \gg \alpha, \gamma\), heavily penalizing latency.

    \item \textbf{Intent \(i_3\) = ``Reduce energy consumption during off-peak hours'':} \\
    This translates to a weight vector \(\mathbf{w}_3 = [0, 0, -1]\), negatively weighting energy consumption.
\end{itemize}

Thus, the LLM's role is not just to select an action \(a\), but to first interpret the intent \(i\) to define the very optimization problem (i.e., the reward function) that it is trying to solve.

\subsection{Appendix B: Expanded Argument for Proposition 3.1 (Expressiveness of LLM-Policies)}
\begin{prop}
    An LLM-RAN operator with sufficient model capacity can approximate any effective mapping from intents to RAN actions, assuming enough prompt context. In other words, the space of policies representable by a large transformer is universal for bounded RAN control tasks.
\end{prop}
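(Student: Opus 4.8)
The plan is to recast the proposition as a universal-approximation statement for sequence-to-sequence maps and then discharge it in two stages: first reducing the intent-to-action map to a function over a finite token vocabulary, and then invoking known expressiveness results for transformers. I would begin by pinning down what \emph{effective} and \emph{bounded} mean, since the statement is only defensible once these are fixed. I would take an effective mapping \(\Phi:\mathcal{I}\times\mathcal{S}\to\mathcal{A}\) to be one that is computable in the recursion-theoretic sense once its arguments are presented as finite token strings, and a \emph{bounded} RAN control task to be one in which intents are drawn from a finite (or fixed-precision) set, the state is encoded at fixed numerical precision, and actions live in the finite RAN Command DSL \(\mathcal{L}\). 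Because every emission must pass the total validator \(V:\mathcal{L}\to\{0,1\}\), the admissible image is the finite subset \(\mathcal{L}_{\mathrm{valid}}=V^{-1}(1)\); after tokenization the target is therefore a map from a finite input alphabet to a finite output alphabet, which is the clean object the expressiveness theorems are phrased against.

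With the problem in sequence-to-sequence form I would split into two regimes. For targets that are continuous after embedding the fixed-precision state into a compact cube \([0,1]^d\), I would apply the transformer universal-approximation theorem (Yun et al.): transformers equipped with positional encodings approximate any continuous sequence-to-sequence function on a compact domain to arbitrary \(L^p\) accuracy. To upgrade this norm-level approximation to the \emph{exact} discrete command a DSL requires, I would use a margin/decoding argument: since \(\mathcal{L}_{\mathrm{valid}}\) is finite there is a separation \(\delta>0\) between decision regions, so fitting the one-hot target logits to accuracy below \(\delta/2\) and taking the \(\arg\max\) over the vocabulary recovers the correct command everywhere except on a set of arbitrarily small measure, and on a genuinely finite task everywhere. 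For targets that are algorithmic rather than interpolative, I would instead invoke the Turing-completeness constructions for transformers (P\'erez et al.; Bhattamishra et al.), in which a chain-of-thought scratchpad lets the model simulate the Turing machine computing \(\Phi\); here the hypothesis \emph{enough prompt context} is precisely the requirement that the context window be long enough to host that scratchpad. Finally I would post-compose the constructed network with \(V\) so that its range is forced into \(\mathcal{L}_{\mathrm{valid}}\), yielding an \(\mathcal{O}_{\mathrm{LLM}}\) that realizes \(\Phi\) and emits only validated commands.

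The main obstacle, and the point where the statement must be read carefully, is the quantifier order hidden inside \emph{sufficient model capacity} and \emph{enough prompt context}. The expressiveness results are not uniform: a single fixed-width, fixed-context transformer is \emph{not} Turing complete, and under bounded precision and bounded context it is confined to a shallow complexity class (log-precision transformers lie within \(\mathsf{TC}^0\)). The honest claim is therefore the per-task existential one, that for \emph{each} bounded task there \emph{exists} a transformer, whose width, depth, and context may grow with the task's complexity, that realizes it, rather than one universal network serving all tasks. A second and subtler gap is that universal approximation is an existence result about representability and says nothing about trainability, sample complexity, or whether a \emph{learned} model attains the margin \(\delta/2\) uniformly; bridging representability to a fitted, safety-certified operator is where the real difficulty lies. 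I would accordingly phrase the proposition as an in-principle expressiveness guarantee and flag both the capacity/context dependence and the learnability gap explicitly, rather than overclaiming a single universal controller.
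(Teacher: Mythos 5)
Your core route is the same as the paper's: reduce the control policy to a sequence-to-sequence map by tokenizing \(\mathcal{S}\) and \(\mathcal{A}\), then invoke the transformer universal-approximation theorem of Yun et al.\ \citep{yun2020}, supplemented by empirical evidence from \citep{xingqi2025, gajjar2025}. Where you go beyond the paper is in filling the gaps its sketch explicitly leaves open (``a formal proof is beyond scope''): you pin down \emph{effective} and \emph{bounded} precisely, you supply the margin/decoding step needed to convert an \(L^p\) approximation of logits into \emph{exact} emission of a discrete DSL command (the paper silently treats approximation of a continuous map as sufficient for a combinatorial action space, which is a real gap in its argument), you add a second, Turing-completeness route for algorithmic rather than interpolative targets, and you post-compose with the validator \(V\) so the range lands in \(V^{-1}(1)\), which connects the proposition to the safety machinery the paper introduces in Section~\ref{concept} but never uses in its proof. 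Your discussion of quantifier order --- that the guarantee is per-task existential rather than one fixed universal network, and that bounded-precision, bounded-context transformers sit in a shallow complexity class --- is also an honest caveat the paper omits; the paper's own Appendix~D raises analogous concerns (tokenization discontinuities, Lipschitz bounds) only for the convergence theorem, not for expressiveness. In short, same decomposition and same key citation, but your version is strictly more complete and correctly identifies where the paper's sketch would fail if pressed, namely at the continuous-to-discrete step and at the uniformity of the capacity claim.
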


\begin{proof}[Sketch]
    The proof follows from the established role of transformers as universal function approximators \citep{yun2020}. Any deterministic RAN control policy can be viewed as a mapping \(f: \mathcal{S} \rightarrow \mathcal{A}\). Given a sufficiently large model and a prompt context that effectively describes the state \(s\), the LLM can approximate this mapping \(f\). Empirical work showing LLMs matching or guiding bespoke RL controllers in networking tasks \citep{xingqi2025} provides evidence for this practical expressiveness. (A formal proof is beyond scope, but follows standard results on transformer expressivity.)
\end{proof}
We model the closed-loop LLM-RAN system as a discrete-time dynamical system: at step \(t\), the network is in state \(s_t \in \mathcal{S}\), the user issues intent \(i\), and the LLM operator outputs an action \(a_t = \mathcal{O}_{\rm LLM}(i,s_t)\). The network then transitions \(s_{t+1} = f_{\rm env}(s_t, a_t)\) according to physics and legacy control. This fits a Markov decision process (MDP) framework \citep{lingyan2025}. Indeed, in practice, each base station or RAN segment can be an MDP: the state space includes channel conditions, user demands, etc., plus any guidance from the LLM (e.g., an initial policy). The action space covers RAN parameters (power splits, scheduling weights \citep{lingyan2025}, and rewards reflect QoS or efficiency goals.

\subsubsection*{Argument}
The core of this proposition rests on two pillars: (1) the established universal approximation property of the Transformer architecture, and (2) the ability to represent the RAN control problem in a format amenable to these models.
\begin{itemize}
    \item[1.] \textbf{Universal Approximation Property of Transformers:} While early work on universal approximation focused on feedforward networks \citep{hornik1989}, recent research has extended these findings to the Transformer architecture that underpins modern LLMs. Specifically, \citep{yun2020} proved that a Transformer with sufficient capacity can approximate any continuous, permutation-equivariant sequence-to-sequence function. This provides the theoretical foundation that a sufficiently large Transformer can, in principle, model any well-behaved functional mapping between an input sequence and an output sequence.
    \item[2.] \textbf{Representing the RAN Control Problem as a Sequence-to-Sequence Task:} The primary challenge is to map our problem onto this sequence-to-sequence paradigm. A deterministic RAN control policy is a function \(f: \mathcal{S} \rightarrow \mathcal{A}\) which maps a state from the state space \(\mathcal{S}\) to an action in the action space \(\mathcal{A}\). To be approximated by an LLM, both the state and the action must be representable as sequences of tokens.

    Tokenizing the State Space (\(\mathcal{S}\)): The RAN state \(s \in \mathcal{S}\) is a high-dimensional, multi-modal vector containing diverse data types (e.g., channel matrices, user traffic queues, interference levels, hardware status). We can construct an effective input sequence for the LLM by tokenizing this state. This is a process of serialization and discretization. A plausible tokenization scheme would involve: 
    \begin{itemize}
        \item Using special tokens to demarcate different types of information (e.g., <CSI>, <QUEUES>, <CONFIG>).
        \item Flattening matrices (like channel state information) into one-dimensional vectors of quantized numerical values.
        \item Representing categorical data (e.g., modulation schemes) with their string names or integer codes.
        The result is a long, but finite, sequence of tokens that serves as the ``prompt context'' describing the current network state \(s_t\). For example: ``<STATE> <CSI> 0.91 0.23 \(\dots\) </CSI> <QUEUES> 1024 512 \(\dots\) </QUEUES> \(\dots\)''
    \end{itemize}
    Generating the Action Sequence (\(\mathcal{A}\)):  Similarly, the action \(a \in \mathcal{A}\) can be represented as a structured sequence. Since actions are often configurations, a natural format is a sequence of key-value pairs, similar to JSON or a domain-specific language (DSL). For example, the desired output sequence for an action could be ``<ACTION> \(\text{set}_{\text{power}}\)(\(\text{cell}_1\), -10dBm); \(\text{set}_\text{scheduler}\)(weights=[0.8, 0.2]) </ACTION>''
    \item[3.] \textbf{Connecting the Pillars:} With these representations, the RAN control policy \(f: \mathcal{S} \rightarrow \mathcal{A}\) becomes a mapping from an input token sequence (representing \(s_t\)) to an output token sequence (representing \(a_t\)). This is precisely the class of problems that Transformers are proven to be able to approximate \citep{yun2020}. Therefore, an LLM-RAN operator, which is a Transformer-based model, can approximate any such policy function \(f\), provided the model has sufficient capacity and is trained or fine-tuned on relevant data.
\end{itemize}
Empirical results from related work, such as \citep{xingqi2025} and \citep{gajjar2025}, provide strong evidence for this proposition in practice. They show that LLMs can indeed learn to generate practical control actions or configurations for networking tasks, implicitly demonstrating that they are successfully approximating complex policy functions.

\subsection{Appendix C: Formal Proof of Lemma 3.2 (Monotonic Improvement)}
\begin{lemma}
    Let \(U(s)\) be a utility function representing a network performance metric. Assume the LLM operator is designed to solve the single-step optimization problem \(a_t = \arg \max_{a \in \mathcal{A}}U(f_{\text{env}}(s_t, a))\). If a solution \(a_t\) exists such that \(U(s_{t + 1}) \geq U(s_t)\), then the sequence of states generated by the system is monotonically improving in utility. This property is observed in LLM-guided cases (e.g., decreasing transmit power raises efficiency).
\end{lemma}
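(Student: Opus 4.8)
The plan is to treat the closed loop as a deterministic discrete-time dynamical system and reduce the claim to an elementary induction, after first making explicit the structural assumption that silently underwrites the per-step inequality. First I would fix notation, writing $a_t = \mathcal{O}_{\rm LLM}(i, s_t) = \arg\max_{a \in \mathcal{A}} U(f_{\rm env}(s_t, a))$ and $s_{t+1} = f_{\rm env}(s_t, a_t)$, so that the trajectory $(s_t)_{t \ge 0}$ is generated by the single map $F(s) := f_{\rm env}(s, \mathcal{O}_{\rm LLM}(i, s))$ — exactly the map used later in the convergence theorem, which keeps the two results aligned.

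The heart of the argument is the single-step improvement property. Rather than merely assuming $U(s_{t+1}) \ge U(s_t)$, I would derive it from the optimization rule together with one mild admissibility condition: that $\mathcal{A}$ contains a \emph{status-quo} (null) action $a_{\rm null}$ with $f_{\rm env}(s, a_{\rm null}) = s$ for every $s$. This is natural for RAN control, where ``leave the current configuration unchanged'' is always a legal command in the DSL $\mathcal{L}$. Given this, the maximizing choice satisfies
\begin{equation}
U(s_{t+1}) = \max_{a \in \mathcal{A}} U\bigl(f_{\rm env}(s_t, a)\bigr) \ge U\bigl(f_{\rm env}(s_t, a_{\rm null})\bigr) = U(s_t),
\end{equation}
which is precisely the hypothesis of the lemma, now discharged rather than postulated. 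With the per-step inequality in hand, a routine induction on $t$ gives $U(s_0) \le U(s_1) \le \cdots \le U(s_t)$ for all $t$, establishing monotonicity of the utility sequence. I would also record a strengthening that bridges to the convergence theorem: if $U$ is bounded above on $\mathcal{S}$ (as physical KPIs such as throughput or spectral efficiency always are), then the monotone bounded sequence $\bigl(U(s_t)\bigr)$ converges by the monotone convergence theorem, so the per-step gains $U(s_{t+1}) - U(s_t)$ vanish, matching the ``updates become smaller over time'' intuition invoked for the fixed point.

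The main obstacle is not the induction, which is immediate, but justifying the two ingredients that make the step inequality hold: existence of the $\arg\max$ and existence of $a_{\rm null}$. The $\arg\max$ requires $\mathcal{A}$ to be finite or compact with $U \circ f_{\rm env}(s_t, \cdot)$ attaining its supremum; the appendix's modeling of $\mathcal{A}$ as a finite set of structured DSL commands secures this. Without the null action the greedy rule could be forced onto a strictly worse successor state, so the admissibility of a do-nothing command is the genuinely load-bearing assumption — one that should be stated explicitly, and that the allow-list validator $V$ must never reject. A final caveat I would flag is that the lemma yields monotonicity in utility, not convergence of the \emph{states} themselves; the latter needs the contraction hypothesis of the subsequent theorem, and I would be careful not to conflate the two.
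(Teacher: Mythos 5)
Your proposal matches the paper's own proof essentially step for step: both hinge on positing a null (do-nothing) action $a_{\rm null}$ with $f_{\rm env}(s_t, a_{\rm null}) = s_t$, using the $\arg\max$ property to dominate it, and concluding $U(s_{t+1}) \ge U(s_t)$ for every $t$. Your added remarks on the existence of the maximizer, the boundedness of $U$ yielding convergence of the utility sequence, and the distinction between utility monotonicity and state convergence are sensible refinements but do not change the route.
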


\begin{proof}
   \begin{enumerate}
   \item We seek to prove that for any time step t, the utility of the next state, \(s_{t+1}\), is greater than or equal to the utility of the current state, \(s_t\).
    By the definition of the system's dynamics, the state at time \(t + 1\) is given by the application of the environment function to the current state \(s_t\) and the chosen action \(a_t\):
    \begin{equation*}
        s_{t+1} = f_{\text{env}}(s_t, a_t)
    \end{equation*}
    \item The action space \(\mathcal{A}\) must contain, either explicitly or implicitly, a ``do-nothing'' or identity action, which we will denote as \(a_{\text{null}}\). This action is defined such that it does not change the state of the network. Therefore, applying the environment dynamics with this action yields the same state:
    \begin{equation*}
        f_{\text{env}}(s_t, a_{\text{null}}) = s_t
    \end{equation*}
    This action \(a_{\text{null}}\) is a member of the set of all possible actions, \(\mathcal{A}\).
    \item According to the central assumption of the lemma, the action \(a_t\) is chosen to be the optimal action that maximizes the utility \(U\) of the resulting state. This means that \(a_t\) must yield a utility that is greater than or equal to the utility produced by any other possible action \(a' \in \mathcal{A}\). 
    \begin{equation*}
        U\left(f_{\text{env}}(s_t, a_t)\right) \geq U\left(f_{\text{env}}(s_t, a')\right) \quad \forall a' \in \mathcal{A}
    \end{equation*}
    \item Since \(a_{\text{null}}\) is a member of \(\mathcal{A}\), the above inequality must also hold for \(a' = a_{\text{null}}\):
    \begin{equation*}
        U(f_{\text{env}}(s_t, a_t)) \geq U(f_{\text{env}}(s_t, a_{\text{null}}))
    \end{equation*}
    \item By substituting the definitions from steps 1 and 2 into the inequality from step 4, we arrive at:
    \begin{equation*}
        U(s_{t+1}) \geq U(s_t)
    \end{equation*}
    \end{enumerate}
    Since this holds for any arbitrary time step \(t\), the sequence of utilities (\(U(s_t)\)) is monotonically non-decreasing. \textbf{QED}
\end{proof}

\subsection{Appendix D: Extended Discussion on Theorem 3.3 (Convergence Conditions)}
\begin{theorem}
    If the closed-loop operator \(F(s) \triangleq f_{\text{env}}\left(s, O_{\text{LLM}}(i, s)\right)\) forms a contraction mapping on the state space \(\mathcal{S}\) concerning a suitable norm, then the sequence of states (\(s_t\)) generated by the system converges to a unique fixed point \(s^*\).
\end{theorem}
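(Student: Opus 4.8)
The plan is to recognize this statement as a direct instance of the Banach fixed-point theorem and to supply the hypotheses that make that classical result applicable. First I would fix the ambient structure: equip the state space $\mathcal{S}$ (or a closed, $F$-invariant subset $\mathcal{S}_0 \subseteq \mathcal{S}$ containing the initial state $s_0$) with the metric $d(s,s') = \|s - s'\|$ induced by the chosen norm, and verify that $(\mathcal{S}_0, d)$ is a \emph{complete} metric space. I would then restate the contraction hypothesis explicitly: there exists a Lipschitz constant $L \in [0,1)$ with
\begin{equation*}
    \|F(s_1) - F(s_2)\| \le L\,\|s_1 - s_2\| \quad \text{for all } s_1, s_2 \in \mathcal{S}_0,
\end{equation*}
where $F(s) = f_{\text{env}}(s, \mathcal{O}_{\text{LLM}}(i,s))$ with the intent $i$ held fixed throughout the loop.

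The core argument then proceeds in the standard three steps. First I would show the iterates $s_{t+1} = F(s_t)$ form a Cauchy sequence by telescoping: since $\|s_{t+1} - s_t\| \le L^t \|s_1 - s_0\|$, for $m > t$ the triangle inequality and the geometric series yield
\begin{equation*}
    \|s_m - s_t\| \le \frac{L^t}{1-L}\,\|s_1 - s_0\| \xrightarrow[t\to\infty]{} 0.
\end{equation*}
By completeness the sequence converges to a limit $s^* \in \mathcal{S}_0$. Next I would establish that $s^*$ is a fixed point: a contraction is (Lipschitz) continuous, so passing to the limit in $s_{t+1} = F(s_t)$ gives $s^* = F(s^*)$, i.e. $f_{\text{env}}(s^*, a^*) = s^*$ with $a^* = \mathcal{O}_{\text{LLM}}(i, s^*)$. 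Finally uniqueness follows by assuming two fixed points $s^*, s^{**}$ and noting $\|s^* - s^{**}\| = \|F(s^*) - F(s^{**})\| \le L\,\|s^* - s^{**}\|$, which forces $\|s^* - s^{**}\| = 0$ because $L < 1$. As a byproduct I would record the a priori rate $\|s_t - s^*\| \le L^t \|s_0 - s^*\|$, which makes precise the ``updates become smaller over time'' intuition in the statement.

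The main obstacle is not the fixed-point machinery itself, which is textbook, but justifying its hypotheses for the RAN setting. Two points deserve care. First, completeness and the self-map property $F(\mathcal{S}_0) \subseteq \mathcal{S}_0$ are delicate because the action space $\mathcal{A}$ is \emph{discrete and combinatorial} (structured DSL commands), so $\mathcal{O}_{\text{LLM}}(i,\cdot)$ is piecewise constant and $F$ is generically discontinuous; a genuine contraction with $L < 1$ on the full heterogeneous product $\mathcal{H} \times \mathcal{Q} \times \mathcal{C} \times \mathcal{I}$ cannot hold unless one either smooths the action selection or restricts attention to a region where the emitted action is locally constant, reducing $F$ to $s \mapsto f_{\text{env}}(s, a^*)$ for a fixed $a^*$. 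Second, verifying the Lipschitz bound requires a quantitative stability assumption on $f_{\text{env}}$, such as the Jacobian condition $\|\partial F / \partial s\| < 1$ from the proof idea; establishing this for a realistic high-dimensional RAN is exactly the open problem the authors acknowledge. I would therefore present the result as conditional and devote the surrounding discussion to \emph{sufficient conditions}, namely local constancy of the LLM policy together with a spectral-radius or Lipschitz contraction condition on $f_{\text{env}}$, under which the clean Banach conclusion provably holds.
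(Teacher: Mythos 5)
Your proposal is correct and follows essentially the same route as the paper: a direct application of Banach's fixed-point theorem to $F(s) = f_{\text{env}}(s,\mathcal{O}_{\text{LLM}}(i,s))$, which the paper only sketches while you spell out the Cauchy/telescoping, continuity, and uniqueness steps in full. Your closing caveats about completeness, the piecewise-constant (hence discontinuous) LLM policy, and the difficulty of verifying the Lipschitz bound mirror the paper's own extended discussion of these obstacles almost point for point.
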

\subsubsection{Discussion}
The proof of this theorem is a direct application of the Banach Fixed-Point Theorem. The theorem's power, however, lies not in its direct application to the general problem\textemdash which is likely intractable\textemdash but in its ability to provide a formal language for dissecting the immense challenges of guaranteeing stable AI-driven network control. The critical assumption is that the composite function \(F(s)\) is a contraction mapping. A function \(F\) is a contraction if there exists a constant \(k\), where \(0 \leq k < 1\), such that for any two states \(s_1, s_2 \in S\), the following inequality holds:
\[
\|F(s_1) - F(s_2)\| \leq k \cdot \|s_1 - s_2\|
\]
This condition, while simple to state, presents several profound research hurdles when applied to the LLM-RAN operator context. We discuss the three primary challenges below.

\textbf{The Challenge of Defining a Suitable Norm on \(\mathcal{S}\)}

The state space \(\mathcal{S}\) of a RAN is not a simple Euclidean space. As outlined in Appendix A, it is a complex, multi-modal product space containing:
\begin{itemize}
    \item Continuous variables (e.g., floating-point values in channel matrices).
    \item Discrete integer variables (e.g., traffic queue lengths).
    \item Categorical variables (e.g., modulation schemes, service classes).
\end{itemize}
Defining a meaningful distance metric or norm \(\|\cdot\|\) on such a heterogeneous space is a significant challenge in itself. A standard \(L_2\)-norm may not be appropriate, as it would treat a change in a channel coefficient as equivalent to a change in a queue length, which may not be operationally meaningful. The choice of norm fundamentally affects whether a function can be proven to be a contraction, and identifying a norm that captures the ``operational distance'' between two RAN states is a significant open problem.

\textbf{The Nature of the Wireless Environment \(f_{\text{env}}\)}

The environment dynamics \(f_{\text{env}}\) are governed by the physics of wireless communication, which are inherently hostile to the assumptions of contraction mappings.
\begin{itemize}
    \item \textbf{Stochasticity:} Wireless channels are stochastic due to multipath fading, shadowing, noise, and user mobility. Consequently, $f_{\text{env}}$ is not a deterministic function, meaning the Banach theorem cannot be directly applied. One would need to resort to more complex stochastic fixed-point theorems, which come with their own sets of stringent requirements.
    \item \textbf{Non-Linearity:} The relationship between actions (e.g., power allocation) and outcomes (e.g., SINR) is highly non-linear and non-convex, making it very unlikely that \(f_{\text{env}}\) is globally contractive.
    \item \textbf{Time-Variation:} The environment is non-stationary. As users move and conditions change, the function \(f_{\text{env}}\) itself changes over time (\(f_{\text{env},t}\)), violating the assumption of a single, fixed function \(F\).
\end{itemize}

\textbf{The Complexity of Bounding the LLM Operator \(O_{\text{LLM}}\)}

Even if the environment were simple and deterministic, bounding the behavior of the LLM operator \(O_{\text{LLM}}\) is a frontier research problem in deep learning.
\begin{itemize}
    \item \textbf{Lipschitz Constant of NNs:} Proving that a deep neural network is Lipschitz continuous, let alone calculating its Lipschitz constant \(k\), is notoriously difficult. For models with billions of parameters, complex attention mechanisms, and non-linear activations, obtaining tight bounds is often computationally intractable.
    \item \textbf{Discontinuities from Tokenization:} The LLM operates on discrete tokens. The process of tokenizing a continuous state \(s\) can introduce sharp discontinuities. An infinitesimally small change in a continuous value in \(s\) could, in principle, alter the tokenized sequence, leading to a drastically different output from \(O_{\text{LLM}}\). This makes the operator potentially non-continuous, violating a prerequisite for being a contraction.
\end{itemize}

\subsubsection*{A Path Forward: The Framework as an Analytical Tool}

Given these hurdles, it is clear that proving global convergence for a general LLM-RAN operator is not a realistic short-term goal. However, the value of Theorem 3.3 is that it provides a formal roadmap for research. It allows the community to investigate convergence under simplified, tractable conditions. For instance, researchers can now ask more precise questions, such as:
\begin{itemize}
    \item ``Under a simplified, deterministic channel model, can we design \(O_{\text{LLM}}\) to be a contraction?''
    \item ``If we restrict the action space \(A\) to a finite set, can we prove convergence to a stable cycle?''
    \item ``Can we design the Control Plane Adapter to explicitly enforce a Lipschitz constraint on the final output, thereby guaranteeing stability?''
\end{itemize}

Thus, our framework transforms a vague goal of ``making AI stable'' into a set of concrete mathematical problems that can be tackled incrementally.

\end{document}